  \providecommand\BibTeX{{%
    \normalfont B\kern-0.5em{\scshape i\kern-0.25em b}\kern-0.8em\TeX}}}
\newtheorem{theorem}{Theorem}
\begin{document}

\title{Enabling Communication via APIs for Mainframe Applications}

\author{Vini Kanvar}
\affiliation{%
  \institution{IBM Research, India}
  \city{}
  \country{}
  }
   \email{vkanv031@in.ibm.com}

\author{Srikanth Tamilselvam}
\affiliation{%
  \institution{IBM Research, India}
  \city{}
  \country{}
  }
   \email{srikanth.tamilselvam@in.ibm.com}
   
\author{Keerthi Narayan Raghunath}
\affiliation{%
  \institution{IBM, India}
  \city{}
  \country{}
  }
   \email{keerag09@in.ibm.com}

\begin{abstract}

For many decades, mainframe systems have played a crucial role in enterprise computing, supporting essential applications in various industries like banking, retail, and healthcare. To leverage the power of these legacy applications and enable their reuse, there is a growing interest in using Application Programming Interfaces (APIs) to expose their data and functionalities, thus facilitating the creation of new applications. However, the process of identifying and exposing APIs for different business use cases poses significant challenges. This involves understanding the legacy code, separating dependent components, introducing new artifacts, and making changes without disrupting the functionality or compromising the key Service Level Agreements (SLAs) like Turnaround Time (TAT).

In this work, we address the challenges associated with creating APIs for legacy mainframe applications and propose a novel framework to facilitate communication for various use cases. To identify APIs, we compile a list of artifacts, including transactions, screens, control flow blocks, inter-microservice calls, business rules, and data accesses. Static analyses, such as liveness and reaching definitions, are then used to traverse the code and automatically compute API signatures, which consist of request/response fields.

To assess the effectiveness of our framework, we conducted a qualitative survey involving nine mainframe developers with an average experience of 15 years. The survey helped us identify the candidate APIs and estimate the development time required to code these APIs on a public mainframe application named GENAPP and two industry mainframe applications. The results indicated that our framework successfully identified more candidate APIs and reduced the implementation time needed for APIfication. The implementation of API signature computation is part of IBM Watsonx Code Assistant for Z Refactoring Assistant. We evaluated the correctness of the identified APIs by executing them on an IBM Z mainframe system, demonstrating the practical viability of our approach.
\end{abstract}

\keywords{COBOL, API, Screen, Signature, z Mainframe}

\maketitle

\section{Introduction}
\label{sec:intro}
Mainframe systems remain a crucial component of enterprise computing in industries like banking, retail, insurance, and healthcare due to their exceptional performance, consistency, and reliability~\cite{industries-mainframe, mainframe}. However, with the growing popularity of cloud computing and microservices there is a rising trend towards "APIfication" of mainframe applications. According to McKinsey and Co in 2021, over 90\% of financial institutions either use or plan to use APIs to generate additional revenue from existing customers~\cite{mckinsey}.

Mainframe APIfication involves modernizing applications by exposing data and functionalities through application programming interfaces (APIs) with tools like zOS Connect~\cite{zosc}. This transformation offers several benefits~\cite{aci09,a11,jmcam13}:
(1) Increased Efficiency: Developers can interact more easily and efficiently with mainframe systems, leading to faster development cycles and reduced operational costs.
(2) Integration with Modern Technologies: APIfication facilitates seamless integration with newer technologies like cloud computing, artificial intelligence, and the Internet of Things (IoT).
(3) Access Control Flexibility: Exposing mainframe systems through APIs enables better access management and implementation of fine-grained security controls.
(4) Business Agility: APIfication empowers advanced analytics and enables swift responses to changing business requirements, fostering agility.
Furthermore, creating APIs for new applications opens up new business opportunities~\cite{mainframe-appmod}. However, designing the right APIs requires significant effort and expertise from experienced mainframe developers, which can be a barrier to adopting z-Cloud integration patterns~\cite{dasgupta2021ai}. To address these, innovation is essential to automate API creation, making it more efficient and reducing the cognitive load and effort.


\begin{figure}
    \includegraphics[width=87mm]{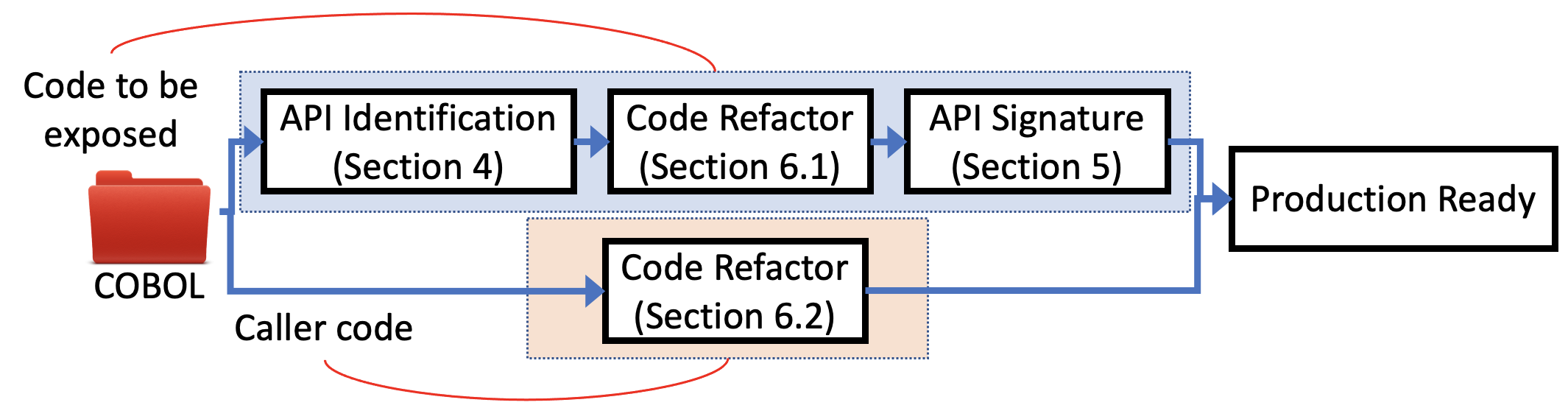} 
    \Description{}
    \caption{Mainframe APIfication Overview : Steps to APIfy a COBOL application include identification of code to be exposed, its refactoring, and computing its signature. It also includes refactoring of the caller code. Finally, it includes making the APIs production ready.}
    \label{fig:overview}
\end{figure}

Identifying APIs manually in mainframe applications presents considerable challenges compared to non-mainframe applications due to the following factors:

\begin{enumerate}
\item    Lack of Semantic Meaning: Program file and variable names in mainframe applications do not inherently convey semantic meaning, making it difficult to identify APIs based on naming conventions.

\item    Complex Data Structures: Programs in mainframe applications are often invoked through large and intricate data structures like copybooks, which are shared across multiple programs, further complicating the identification of APIs.

\item    Monolithic Codebase: Mainframe code tends to be monolithic, many times a single program encompasses multiple functions. This complexity makes it difficult to isolate APIs for specific functionalities.

\item    Non-Intuitive Code Blocks: Within a program, code blocks addressing specific functionalities are written as sections and paragraphs, and variables are passed between these blocks in a non-intuitive or non-obvious manner.

\item    Multiple Data Structures: A mainframe application's single data store can hold diverse data structures, adding complexity to API identification. For example, a VSAM file may contain various record types, including multiple "redefined" layouts, making it difficult to ascertain which code block writes each record type.

\item    UI-Backend Interaction Differences: The way mainframe screens (UI) interact/interface with the backend logic differs significantly from how web pages (GUI) interact. As more customers transition from mainframe screens to modern user interfaces, accurately capturing data exchanged between user-facing interfaces and the backend through APIs becomes crucial.
\end{enumerate}

To address the challenges mentioned earlier and facilitate Mainframe APIfication, we present a novel methodology that automates the process of API identification, computation of API signatures, and code refactoring. The final step of making the APIs production-ready which involves implementing robust error handling mechanisms, load testing etc. is left for future work, as depicted in Figure~\ref{fig:overview}.

To demonstrate the APIfication process, we utilize a publicly available mainframe COBOL application called GENAPP~\cite{genapp}. GENAPP is a General Insurance Application, supporting customer management and features such as adding, updating, and deleting policies for Housing, Motor, Endowment, and Commercial Insurance. Figure~\ref{fig:input} illustrates the "Inquire Motor Policy" feature. It involves reading variables like ENP1CNOO (customer number) and ENP1PNOO (policy number) and subsequently invoking program LGIPOL01, which, in turn, calls LGIPDB01 to access the Motor database table. Finally, it returns the Motor policy details in variables like ENP1IDAI (issue date), ENP1EDAI (expiry date), and others.

Below we introduce our automated approach for identifying candidate APIs, computing the API signatures, exposing APIs, and creating the caller code:

\begin{enumerate}
\item 
    Automatic Identification of Candidate APIs: Our method involves locating the code that needs to be exposed or APIfied within the application. This identification process includes determining various artifacts acting as seeds, such as transactions, screens, control flow blocks, procedures, inter-microservice calls, business rules, data accesses, client requirements and more. For instance, in the case of GENAPP, we identify each control flow block triggered by various transactions (e.g., inquire, add, update, motor) related to Housing, Motor, Endowment, and Commercial into distinct APIs. Thus, lines 82 to 109 in Figure~\ref{fig:input} represent a candidate API for inquiring about motor policies.

\item
    Computation of API Signature: The API signature comprises the request and response fields or variables, also referred to as inputs/outputs throughout the paper. To identify these request and response fields, we perform static program analysis using techniques like data-flow analysis~\cite{dfa} and control-flow analysis~\cite{csur17}. Additionally, we propose call chain analysis for program calls within the code blocks to be APIfied. To address cyclic calls, we utilize fix point computation~\cite{dfa, csur17}. For GENAPP, as shown in Figure~\ref{fig:input}, ENP1CNOO and ENP1PNOO serve as the request fields because they are fetched by the APIfied code block; ENP1IDAI, ENP1EDAI, and others form the response fields of the API because their updated values are returned by the APIfied code block.

\item
    API Exposure and Caller Code Creation: The APIs are exposed by specifying the COBOL program to be exposed and providing the request/ response copybooks. 
    Additionally, the caller code is manually prepared by invoking a communication stub to access the API. Various connect tools, such as IBM zOS Connect~\cite{zosc}, can be utilized for these steps, but the method is not limited to a particular tool.
\end{enumerate}

\begin{figure}
    \centering
    \includegraphics[width=70mm]{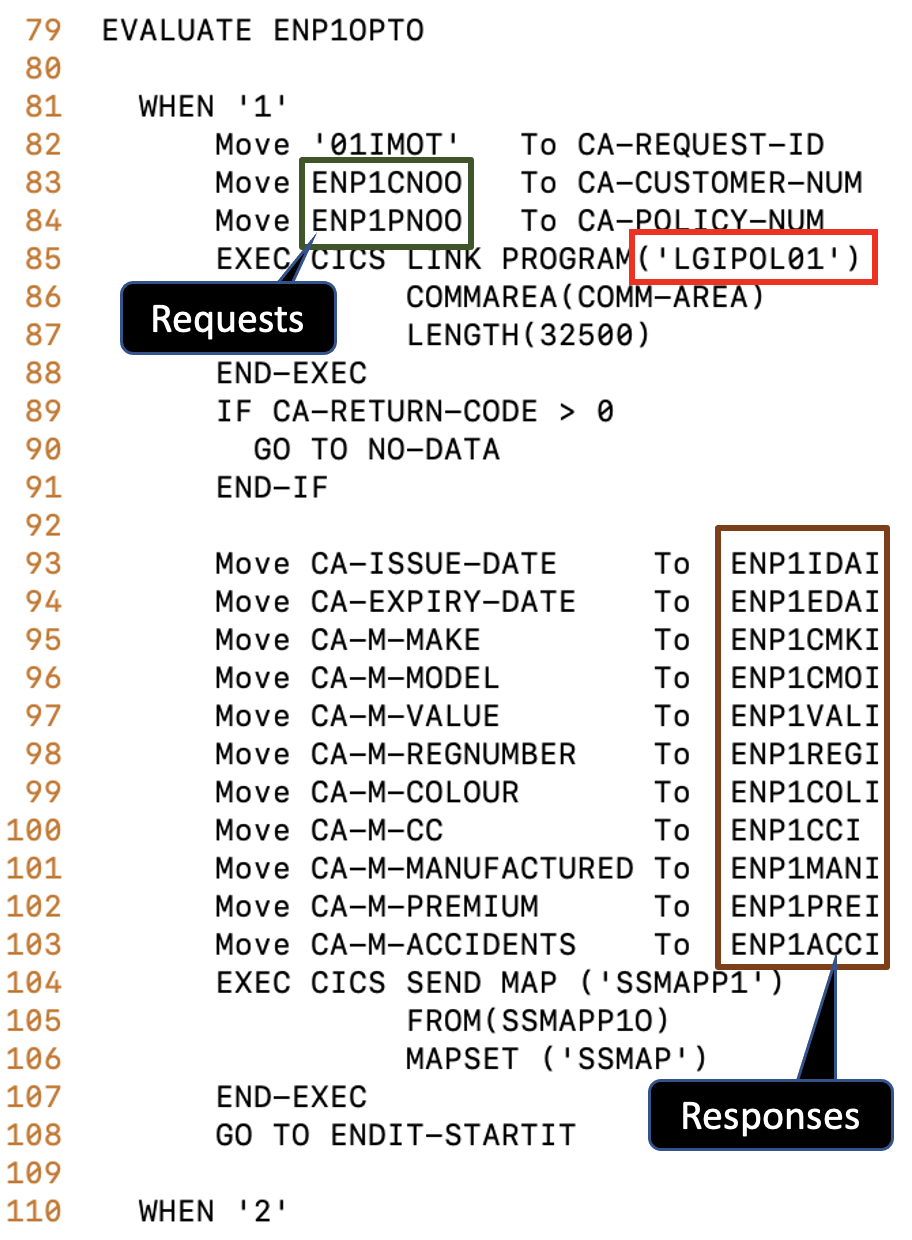}
    \Description{}
    \caption{Snippet from program LGTESTP1 of GENAPP to inquire motor policy. Lines 82 to 109 need to be exposed as an API. Requests and responses are marked for this API. Program LGIPOL01 is called from the code block.}
    \label{fig:input}
\end{figure}

Our paper makes the following main contributions:

    \begin{itemize}
    \item Automatic identification of Candidate APIs in Mainframe Applications (Section~\ref{sec:id}): Our automated approach to identify potential APIs makes the process more efficient.

    \item Automated Computation of API Signature (Section~\ref{sec:sign}): Our variants of static program analysis to automatically compute API signatures, streamlines the process and decreases the amount of manual effort needed. We formalize and prove the correctness of our analyses to establish a solid foundation for the process and accurately extract APIs.

    \item Identification of Code Refactoring Requirements for Efficient APIs (Section~\ref{sec:refactor}): We identify scenarios where code refactoring is necessary to develop efficient APIs, ensuring optimal performance and usability.

    \item Demonstration of Mainframe APIfication using Industry Applications: We showcase the Mainframe APIfication process using examples from a publicly available application like GENAPP and two industry applications, employing automated tooling to illustrate its practical application.

    \item Qualitative and Quantitative Studies: We present both qualitative and quantitative studies to demonstrate the effectiveness of our technique. These studies (Sections~\ref{sec:exec} and \ref{sec:study}) provide insights into the benefits and performance improvements achieved through Mainframe APIfication.
    \end{itemize}

 Automatic identification of APIs, computation of API signatures, and code refactoring are in Sections~\ref{sec:id}, \ref{sec:sign}, and \ref{sec:refactor}. Experiments and qualitative study is in Section~\ref{sec:exp}. Section~\ref{sec:related} discusses related works. 
Appendix~\ref{sec:usecases} presents the business usecases of APIfication. Mathematical formalization and proofs are in Appendix~\ref{sec:formal}. 
\section{Automatically Identifying Candidate APIs}
\label{sec:id}
To identify APIs in mainframe applications, our methodology involves compiling a comprehensive list of artifacts, including transactions, screens, control flow blocks, inter-microservice calls, business rules, data accesses, and others present in the application.

\begin{enumerate}
    \item
    Transactions and Jobs: Mainframe applications typically start functionalities from transactions and jobs, which can be used to identify candidate APIs. For example, GENAPP has five main transactions - SSC1 for Customer management and SSP1, SSP2, SSP3, SSP4 for Housing, Motor, Endowment, and Commercial policies management, respectively. We begin API identification from program LGTESTP1, which is the first program called by SSP1 (Figure~\ref{fig:input}).

    \item
    Control Flow Blocks: We consider APIfying important control flow blocks, as they often contain significant functionalities related to starting transactions.

    \item
    Data Access Points: Code that performs database or file access, along with the code before and after that loads variables and handles errors, can be exposed as APIs. We offer Data APIs for both fixed and dynamic queries, enabling users to input any SQL query as a parameter to the API. This helps create a data access microservice layer. 

    \item
    Procedures as APIs: Procedures usually perform crucial business functionalities. Standalone procedures that do not call other procedures are strong candidates for conversion into APIs, representing single business functionalities. 
    For example, in GENAPP, program LGIPDB01 contains procedure GET-MOTOR-DB2-INFO, which accesses Policy and Motor tables, and could be identified as an API candidate.

    \item
    Screen APIs and Screen Modernization: Screens in mainframe applications refer to terminals for accessing the menu. By parsing screen positions specified in a text file, we can infer the fields required for each functionality. Functionalities specified in each screen can then be made into APIs, with input fields becoming requests and output fields becoming responses. Screen scraping~\cite{scraping} can be automated through APIfication, but we propose extracting relevant code and optimizing it to avoid unnecessary processes.

    \item
    Business Functionalities and Rules as APIs: We propose identifying programs carrying business rules using tools like \cite{k10,s01,se96,sdbac21,dsb20}. Our framework will extract the necessary code and host it in decision managers, such as RedHat Decision Manager\cite{redhat}.

    \item
    APIs for Inter-Microservice Communication: To enable communication between candidate microservices~\cite{mathai2021monolith}, cross-microservice function calls in a monolithic architecture need to be replaced with APIs. Our framework will identify the cross-microservices dependencies in the candidate microservices as APIs~\cite{icws22, icse23}.

    \item
    Client Requirements and External Application-Based APIs: APIs can be identified based on client demands or how the mainframe application is called externally.
\end{enumerate}

Our methodology's automated API identification approach will accelerate interface discovery and reduce the effort required to identify optimal API insertion points within mainframe applications.
\section{Automatically Finding API Signature}
\label{sec:sign}

COBOL programs commonly call other programs using copybooks, which define the data structures in COBOL. However, the number of fields in a copybook can be extensive, making it challenging to manually identify which fields will serve as requests and responses for API purposes. In this section, we present various algorithms designed to automatically identify the API signature, encompassing the request and response fields. Mathematical formalization and proofs are in Appendix~\ref{sec:formal}.

\begin{figure}
    \includegraphics[width=89mm]{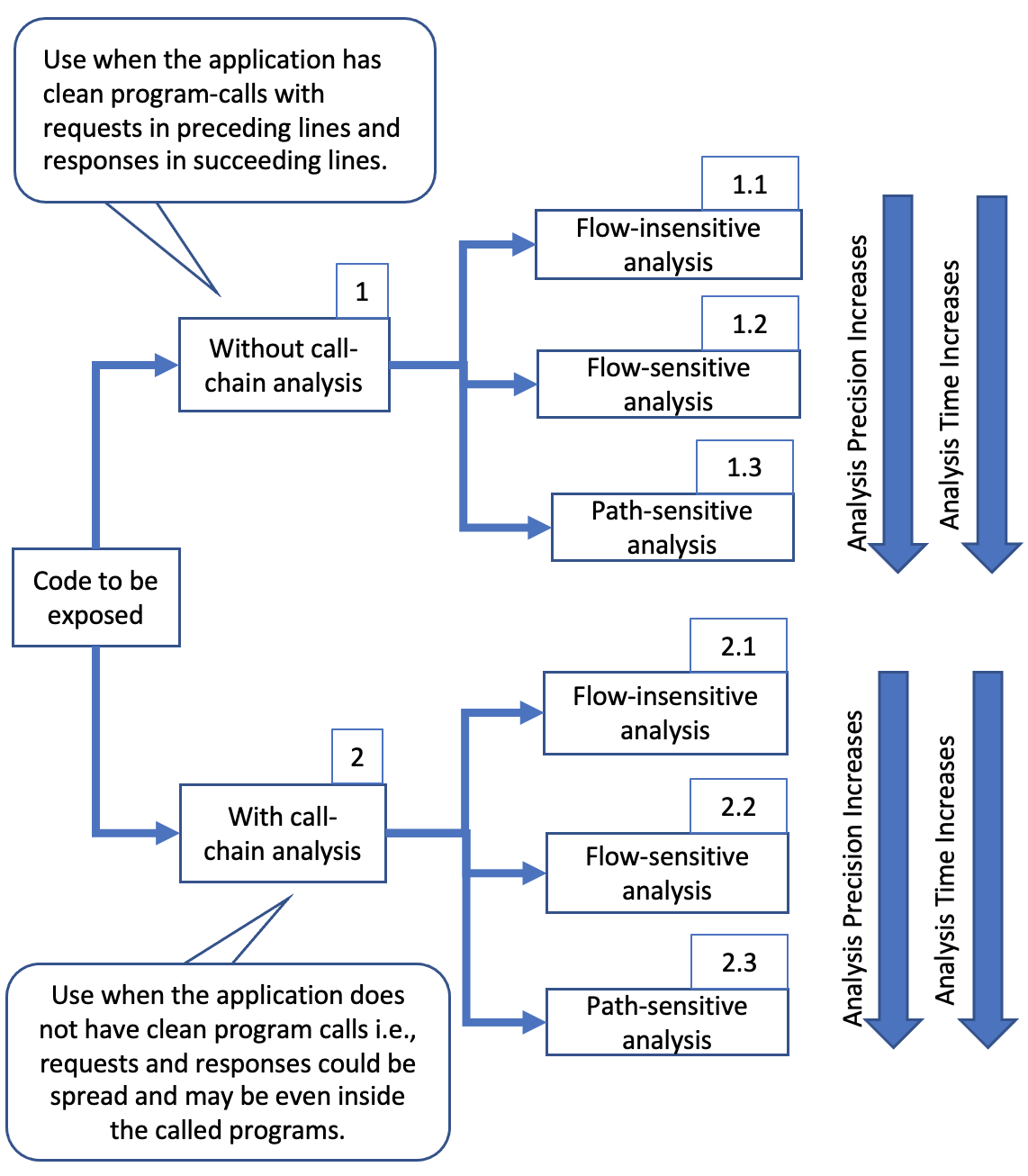} 
    \Description{}
    \caption{6 variants to compute requests/responses using combinations of with/without call chain analyses and flow-sensitive/flow-insensitive/path-sensitive analyses.}
    \label{fig:variants}
\end{figure}

\subsection{Intra-Program Analysis}
In this section, we introduce the process of defining request and response fields using static analyses. We present variants of static analyses (Sections~\ref{sec:reqres} and \ref{sec:var}) and propose a surety metric (Section~\ref{sec:surety}) for evaluating the reliability of our results.

\subsubsection{Request and Response Fields Using Static Analyses}
\label{sec:reqres}
Suppose we have code lines from $p$ to $q$ that need to be exposed as an API based on steps referred in Section~\ref{sec:id}. We propose the following techniques to determine the requests and responses of the API:

\begin{itemize}
\item
    Requests for lines $p$ to $q$: To identify request fields, we employ liveness analysis~\cite{dfa}. This analysis marks field $f$ as a request field if there exists a path from $p$ to $q$ that contains an r-value occurrence of $f$ (i.e., read of $f$) which is not preceded by an l-value occurrence of $f$ (i.e., write to $f$).

\item
    Responses for lines $p$ to $q$: To determine response fields, we use reaching definitions analysis~\cite{dfa}. This analysis marks field $f$ as a response field if there exists a path from $p$ to $q$ that contains an l-value occurrence of $f$ (i.e., write of $f$), and if there exists a path from $q$ to the application exit that contains an r-value occurrence of $f$ (i.e., read to $f$).
\end{itemize}

\subsubsection{Variants of Static Analysis}
\label{sec:var}
We propose to employ three variants of static data flow analysis, namely, flow-insensitive, flow-sensitive, and path-sensitive~\cite{dfa, csur17}, to compute the requests and responses, as depicted in Figure~\ref{fig:variants}. These variants are listed in ascending order of precision and analysis time for computing the requests and responses.

    {\bf Flow-insensitive analysis}: This analysis computes requests and responses without considering the order of reads and writes to variables, making it less precise but faster to compute. It marks fields that are read as request fields and fields that are written as response fields. For example, if the COBOL code to be exposed contains statements {\em MOVE A to B. MOVE B to C} for variables A, B, and C, using flow-insensitive analysis, A is precisely marked as a request variable, B is imprecisely marked as request (because its value comes from A) and precisely marked as response, and C is precisely marked as a response variable.

    {\bf Flow-sensitive analysis}: This variant computes requests and responses by considering the data flow dependencies within the code, taking into account the order of reads and writes to variables.  It considers fields that are read without a preceding write as request fields and fields that are written within the API and read after returning from the API as response fields. For the example mentioned earlier, A is precisely marked as a request variable, and C is precisely marked as a response variable. This variant correctly refrains from marking B as a response, since its value is derived from A and not from an external source. In contrast, the flow-insensitive analysis erroneously marks B as a request, leading to imprecision in the identification of the request field.

    {\bf Path-sensitive analysis}: This variant involves considering fields that are read without a preceding write on feasible paths from $p$ to $q$ as request fields, and fields that are written within the API and read after returning from the API on feasible paths from $p$ to $q$ as response fields. Feasible paths are control flow paths computed based on statically computed values of conditional statements. Consider the COBOL code to be exposed with the following statements: {\em WHEN '1' MOVE A to B. WHEN '2' $\langle$ something $\rangle$. WHEN '1' MOVE B to C.} In this case, there are two feasible paths: {\em MOVE A to B. MOVE B to C.} and the other involves $\langle$ {\em something} $\rangle$. Using path-sensitive analysis, we can accurately mark A as a request and C as a response. Here B is not marked as a request since its value is derived from A and not from an external source. Furthermore, B is not marked as a response because it is not read after the code to be exposed. In contrast, in flow-sensitive analysis, B would have been imprecisely marked as a request because it would not be aware of the feasible path where {\em MOVE B to C} is always preceded by {\em MOVE A to B}. This highlights the advantage of path-sensitive analysis in providing more accurate and reliable results for identifying the request and response fields.

Flow-insensitive analysis is the least precise but is simple to implement and fast to compute. Flow-sensitive analysis is more precise but requires a more complex analysis, involving steps such as propagation of read fields backward in the control flow and fixpoint computation. To improve the time efficiency of the analysis, we compute the post-order of statements and program calls and then propagate the information~\cite{dfa}.


In summary, we demonstrate how different variants, namely, flow-insensitive, flow-sensitive, and path-sensitive static analyses, can be leveraged to compute requests and responses. These variants are ordered in ascending precision and analysis time.

\subsubsection{Guarantee of Analysis Results}
\label{sec:surety}
We propose to introduce a measure of surety for each field's requests and responses using a boolean flag called {\em optional}. Here's how we determine the surety and set the {\em optional} flag:

\begin{itemize}
\item If field $f$ is only read along all paths from $p$ to $q$, then it is undoubtedly a request field. Therefore, we mark the {\em optional} flag as false.

\item If field $f$ is only written along all paths from $p$ to $q$, then it is unquestionably a response field. Consequently, we mark the {\em optional} flag as false.

\item If field $f$ is both read and written along some path from $p$ to $q$, then it implies that $f$ may or may not be a request or response field. In such cases, we set the {\em optional} flag to true. The user can manually verify and decide whether to designate field $f$ as a request, a response, or both.
\end{itemize}

Furthermore, we identify the API type as GET, POST, PUT, or DELETE based on the operations performed in the API. We can leverage Natural Language Processing (NLP) techniques on the comments and code to generate candidate API names.

In conclusion, users have the option to select flow-sensitive, flow-insensitive, or path-sensitive analyses based on their desired level of precision and the time constraints of the analysis.

\subsection{Inter-Program Analysis}

We propose two variants of traversing program calls from lines $p$ to $q$ to determine the requests and responses for APIfication, as described in Figure~\ref{fig:variants}.

    {\bf Without call chain analysis}: In this variant, if there is a program call along some path from $p$ to $q$, we do not analyze the code in the called program. Instead, we assume that the request fields can be computed from the code lines preceding the program call, and the response fields can be computed from the code lines succeeding the program call. This approach is suitable when the monolith has well organized program calls, with requests in the lines before the program call and responses in the lines after the program call. For instance, in the code snippet of lines 82--109 in Figure \ref{fig:input}, there is a program call to LGIPOL01. The lines preceding the program call contain the request fields, and the lines succeeding the program call contain the response fields. However, it is important to note that this variant may miss some request and response fields present in the called program if the application is not coded in this well organized manner.

    {\bf With call chain analysis}: If the monolith does not have well organized program calls, where requests and responses could be spread and possibly even inside the called programs, then we use this variant. Here, we thoroughly analyze the code lines in the called program to find the requests and responses. We perform this analysis recursively and perform fixpoint computation of the requests and responses.


In conclusion, users have the option to select without or with call chain analysis depending on whether the application code blocks call other programs in a well organized manner. 
\section{Code Refactor}
\label{sec:refactor}
To enable communication, both the exposed code (Section~\ref{sec:exposed}) and the caller code (Section~\ref{sec:caller}) need to be refactored.

\subsection{Exposed Code}
\label{sec:exposed}
Starting from the artifacts/seeds like transactions, screens, business rules, etc., we extract code blocks that need to be exposed as APIs by slicing the code. However, these slices may include processes that are not relevant to the API. To address this, we propose further refactoring of the sliced code in the following cases:

\begin{itemize}
\item Terminal commands that appear in the API should be removed or placed under an if-else condition to prevent their execution as part of the API but allow them to be executed otherwise.

\item DB table SQL queries need to be optimized to fetch only the required fields for the API. Similarly, any other external resources, such as VSAM files or MQ, should be accessed only if required by the service.

\item Sanity checks that might have existed in the code need to be removed from the API or placed under an if-else condition to avoid their execution as part of the API but still execute them when needed.

\end{itemize}

After refactoring, the code can be exposed using any connect tool, like IBM zOS Connect~\cite{zosc}. This tool generates a definition file for the API and includes schema definitions for the determined requests and responses in the form of Swagger.

\subsection{Caller Code}
\label{sec:caller}
Once the code is exposed as APIs, the caller code needs to be refactored as follows:
\begin{inparaenum}
\item For callers inside the application, they should now call the exposed code directly.
\item For callers outside the application, they should call the APIs using REST API calls.
\end{inparaenum}
Refactoring is done without duplicating the exposed code to ensure the maintenance of the applications.

For callers that need to make REST API calls, they utilize communication stubs of IBM zOS Connect. The communication stubs internally call the exposed code blocks from zOS Connect. During this process, the arguments of the caller are passed to the parameters or the request fields of the APIfied code block. Data structure mapping of parameters and arguments is accomplished using byte alignment~\cite{byte-alignment}. This ensures seamless integration and communication between the caller code and the exposed APIs.
\section{Experiments and Qualitative Study}
\label{sec:exp}

The implementation of our API signature computation is part of IBM Watsonx Code Assistant for Z Refactoring Assistant~\cite{ibmra}. We have implemented both with and without call chain analyses, in conjunction with both flow-insensitive and flow-sensitive static analyses. We have not implemented the path-sensitive approach because in practice, analyzing all possible control flow paths for large, complex software can be extremely time-consuming and resource-intensive, if not impossible.

For our experiments, we utilized the GENAPP application~\cite{genapp}, a public industry application namely CBSA~\cite{cbsa}, and a private industry application, with details anonymized to protect confidentiality. Conducting experiments on mainframe systems and implementing COBOL parsing in Python presented challenges, as the availability of COBOL/mainframe developers is limited, despite the widespread reliance on mainframes in the industry~\cite{mainframe}.

GENAPP comprises 36 programs, 5 copybooks, and 8 DB2 tables, involving a total of 7,566 variables. From this application, our system identified several candidate APIs using the approaches discussed in Section \ref{sec:id}. For our experiments, we selected the following APIs:
\begin{itemize}
\item Transaction-based APIs: 
Inquire motor policy (TI),
Add motor policy (TA),
and Update motor policy (TU).
\item Data access-based APIs:
Inquire motor policy (DI),
Add motor policy (DA),
and Update motor policy (DU).
\end{itemize}

The private industry application, which deals with Bill Proration for a telecommunication company consists of 34 programs, 134 copybooks, and 123 DB2 tables, with a total of 11,543 variables. We referred to their business requirements document and identified the following candidate APIs:
\begin{itemize}
\item API1: Check for a suitable prorated billing plan and associated pricing for the given advance period and apply it. Contains business rules and functionality.
\item API2: Accumulate the prorated billing amounts into buckets. Contains business rules and functionality.
\item API3: Check billing plan that matches the next cycle start date and has not been backdated, indicating its availability for advance application. Contains data accessing points.
\item API4: Financial market. Contains business functionality.
\item API5: Handle booking. Contains business functionality.
\end{itemize}

CBSA~\cite{cbsa} is a CICS Bank Sample Application, which simulates the operation of a bank from the point of view of the Bank Teller. It contains 46 programs, 9 CICS transactions, 3 DB2 tables, 31 copybooks. We referred to their RESTful API guide and identified the following APIs:
\begin{itemize}
\item API1: Gets data from account table. Contains data accessing points.
\item API2: Updates account details. Contains business rules, functionality, and data accessing points.
\item API3: Read customer file. Contains data accessing points.
\item API4: Utility program to write errors in a centralized datastore. Contains a data accessing point.
\item API5: Write a record to the customer VSAM file. Contains business rules, functionality, and data accessing points.
\end{itemize}

With these selected APIs, we conducted our experiments to validate and assess the effectiveness of our proposed methodology.

\subsection{Comparison of Manual vs. Automated API Identification}
Below is a comparison of manual versus our proposed automated steps in the APIfication of GENAPP.

\begin{itemize}
\item \textbf{Manual API Identification}: We engaged five Subject Matter Experts (SMEs) with an average experience of 10 years, having expertise in mainframe modernization, APIfication, and a good understanding of the GENAPP code. The SMEs manually listed 15 transaction-based APIs and 10 data accessing APIs in the GENAPP application.
\item \textbf{Automatic API Identification}: Leveraging our automated methodology,  our system identified 15 transaction-based APIs, 21 procedure-based APIs that do not involve data access, and 21 data accessing APIs.
\end{itemize}

As observed from the comparison, we successfully identified the same transaction-based APIs as the manual process, ensuring no missed APIs. Our methodology exhibits a coverage of 100\%, meaning that all APIs in GENAPP were accurately identified without any omissions.

\begin{figure}
    \centering
    \begin{tabular}{|l|l|l|l|l|}
    \hline
    \multirow{2}{*}{API} 
    & \multicolumn{2}{c|}{Without call chain}
    & \multicolumn{2}{c|}{With call chain} \\ \cline{2-5}

    & FI & FS & FI & FS
    \\ \hline \hline

    TI & 17.57 & 25.42 & 49.48 & 147.94 \\ \hline
    TA & 12.51 & 23.65 & 120.72 & 453.33  \\ \hline
    TU & 23.19 & 48.59 & 168.79 & 540.40 \\ \hline
    DI & 14.07 & 26.07 & 13.58 & 29.13 \\ \hline
    DA & 18.73 & 27.99 & 14.85 & 31.42 \\ \hline
    DU & 10.61 & 23.74 & 11.10 & 21.27 \\ \hline
    \end{tabular}
    \Description{}
    \caption{Analysis time in seconds of the four variants of computing requests and responses. These are GENAPP APIs explained in Section~\ref{sec:exp}. FI and FS stand for Flow-Insensitive and Flow-Sensitive.}
    \label{fig:exp1}
\end{figure}

\begin{figure}
    \small 
    \setlength\tabcolsep{3.5 pt}
    \centering
    \begin{tabular}{|l|l|l|l|l|l|l|l|l|}
    \hline
    \multirow{3}{*}{API} 
    & \multicolumn{4}{c|}{Without call chain}
    & \multicolumn{4}{c|}{With call chain} \\ \cline{2-9}

    & \multicolumn{2}{c|}{FI}
    & \multicolumn{2}{c|}{FS} 
    & \multicolumn{2}{c|}{FI}
    & \multicolumn{2}{c|}{FS} 
    \\ \cline{2-9}
    
    & Req & Res
    & Req & Res
    & Req & Res
    & Req & Res \\ \hline \hline

    TI & 0+2 & 0+11 & 0+2 & 0+11 & 0+45 & 0+73 & 0+2 & 0+73 \\ \hline
    TA & 0+12 & 0+4 & 0+12 & 0+4 & 0+56 & 0+48 & 0+34 & 0+48 \\ \hline
    TU & 0+13 & 0+14 & 0+11 & 0+14 & 0+90 & 17+91 & 0+4 & 17+91 \\ \hline
    DI & 0+29 & 17+23 & 0+20 & 17+23 & 0+29 & 17+23 & 0+20 & 17+23 \\ \hline
    DA & 9+5 & 0+5 & 5+0 & 0+5 & 9+5 & 0+5 & 5+0 & 0+5 \\ \hline
    DU & 9+5 & 0+5 & 5+0 & 0+5 & 9+5 & 0+5 & 5+0 & 0+5 \\ \hline
    \end{tabular}
    \Description{}
    \caption{Number of working storage + linkage section fields for requests and responses using the four variants of computation. These are GENAPP APIs explained in Section~\ref{sec:exp}. FI and FS stand for Flow-Insensitive and Flow-Sensitive. Req and Res stand for Requests and Responses.}
    \label{fig:exp2}
\end{figure}

\begin{figure}
    \small 
    \setlength\tabcolsep{3.5 pt}
    \centering
    \begin{tabular}{|l|l|l|l|l|l|l|l|l|}
    \hline
    \multirow{3}{*}{API} 
    & \multicolumn{4}{c|}{Without call chain}
    & \multicolumn{4}{c|}{With call chain} \\ \cline{2-9}

    & \multicolumn{2}{c|}{FI}
    & \multicolumn{2}{c|}{FS} 
    & \multicolumn{2}{c|}{FI}
    & \multicolumn{2}{c|}{FS} 
    \\ \cline{2-9}
    
    & Req & Res
    & Req & Res
    & Req & Res
    & Req & Res \\ \hline \hline

    API1 & 9 & 4 & 9 & 4 & 9 & 4 & 9 & 4 \\ \hline
    API2 & 5 & 4 & 5 & 4 & 5 & 4 & 5 & 4 \\ \hline
    API3 & 19 & 15 & 19 & 15 & 19 & 15 & 19 & 15  \\ \hline
    API4 & 31 & 21 & 31 & 21 & 221 & 177 & 31 & 177 \\ \hline
    API5 & 21 & 20 & 21 & 20 & 211 & 176 & 21 & 176 \\ \hline
    \end{tabular}
    \Description{}
    \caption{Number of fields for requests and responses using the four variants of computation. These are private industry application APIs explained in Section~\ref{sec:exp}. FI and FS stand for Flow-Insensitive and Flow-Sensitive. Req and Res stand for Requests and Responses.}
    \label{fig:exp4}
\end{figure}

\begin{figure}
    \small 
    \setlength\tabcolsep{3.5 pt}
    \centering
    \begin{tabular}{|l|l|l|l|l|l|l|l|l|}
    \hline
    \multirow{3}{*}{API} 
    & \multicolumn{4}{c|}{Without call chain}
    & \multicolumn{4}{c|}{With call chain} \\ \cline{2-9}

    & \multicolumn{2}{c|}{FI}
    & \multicolumn{2}{c|}{FS} 
    & \multicolumn{2}{c|}{FI}
    & \multicolumn{2}{c|}{FS} 
    \\ \cline{2-9}
    
    & Req & Res
    & Req & Res
    & Req & Res
    & Req & Res \\ \hline \hline

    API1 & 11 & 19 & 11 & 19 & 24 & 47 & 24 & 47 \\ \hline
    API2 & 38 & 22 & 38 & 22 & 39 & 47 & 38 & 47 \\ \hline
    API3 & 9 & 0 & 9 & 0 & 10 & 21 & 10 & 21 \\ \hline
    API4 & 4 & 3 & 4 & 3 & 4 & 3 & 4 & 3\\ \hline
    API5 & 39 & 30 & 30 & 30 & 39 & 54 & 39 & 54 \\ \hline
    \end{tabular}
    \Description{}
    \caption{Number of fields for requests and responses using the four variants of computation. These are CBSA APIs explained in Section~\ref{sec:exp}. FI and FS stand for Flow-Insensitive and Flow-Sensitive. Req and Res stand for Requests and Responses.}
    \label{fig:exp5}
\end{figure}

\subsection{Static Analysis Time}
Figure~\ref{fig:exp1} presents a list of six identified APIs in GENAPP, along with their analysis timings conducted on a personal computer with the following configuration: 2.3 GHz Quad-Core Intel Core i7, macOS Ventura, 32GB RAM. The figure displays the analysis time taken to find requests and responses for each API using four different variants of static analysis, encompassing all combinations of with/without call chain analyses and flow-insensitive/sensitive analyses. The timings include fetching control flow graphs and program call information from a remote machine hosting ADDI (COBOL parser)~\cite{addi}, with the actual static analysis performed on the local computer. Key comparisons are provided below:

\begin{itemize}
\item \textbf{Flow-sensitive vs. Flow-insensitive}: Flow-insensitive analysis exhibits a shorter analysis time compared to flow-sensitive analysis, as the former is computationally simpler.
\item \textbf{With and Without Call Chain Analysis}: The analysis time for without call chain analysis is lower than with call chain analysis since the former does not involve traversing any called programs. Notably, the analysis time for APIs DA, DI, and DU, which do not have any calls, remains similar between the with and without call chain analysis variants. Overall, our recommended variant of without call chain analysis yields the most efficient results.
\end{itemize}

\subsection{Requests and Responses}
Figures~\ref{fig:exp2}, \ref{fig:exp4}, and \ref{fig:exp5} present the count of request and response fields for GENAPP, the private industry application, and CBSA respectively. These counts are computed using the four variants of static analysis, including all combinations of with/without call chain analyses and flow-insensitive/sensitive analyses. In Figure~\ref{fig:exp2}, we show the number of working storage and linkage section fields used as requests and responses for each API in GENAPP. The linkage section contains the variables used to send/receive data to/from the called program, while the working storage contains variables used by the program. Key comparisons of the variants used for computing requests and responses are as follows:

\begin{itemize}
\item \textbf{Flow-sensitive vs. Flow-insensitive}: Flow-sensitive analysis is the most precise, correctly identifying some fields that should not be marked as request since they are first written and then read. Notably, in analyzing the private industry application's API4, flow-insensitive analysis identifies 221 fields, while flow-sensitive analysis identifies only 31 fields. It is observed that the additional 190 fields identified by flow-insensitive analysis are spurious and unnecessary as request fields.
\item \textbf{With and Without Call Chain Analysis}: Computation using without call chain analysis may miss some request and response fields present in the called program. However, GENAPP and the industry applications are coded in a way that all requests and responses are present before and after the program call, not inside it. Therefore, without call chain analysis yields precise results. With call chain analysis, on the other hand, adds spurious fields when analyzing the called programs. However, code blocks for DA, DI, DU APIs in GENAPP do not have any calls, and similarly, code blocks for API1, API2, and API3 in the private industry application do not have any calls. Therefore, computation using with and without call chain analysis produces the same fields. Overall, our recommended variant of without call chain analysis produces the most precise results for GENAPP.
\end{itemize}

\subsection{Demonstrating Need of Automatic APIfication}
Automating the process of identifying requests and responses in mainframe applications is essential due to the complexity and size of the data-stores involved. Mainframe applications typically define large data-stores that are exchanged between programs, making the manual identification of relevant requests and responses a tedious and error-prone task.

To highlight the difference in the number of fields in the data-stores compared to the requests and responses for each API, we conducted a comparison (Figure~\ref{fig:exp2}). In GENAPP, there are two copybooks, namely LGCMAREA and SSMAP, with a total of 593 fields. In the original monolith, all these fields are passed in program calls. However, our static analysis results for six APIs in Figure~\ref{fig:exp2} demonstrate that only a fraction of these fields are used as requests and responses. The minimum number of linkage section fields used as requests/responses by an API is 2, and the maximum is 91. Similarly, in the private industry application, which contains 11,543 variables, only a few of them are identified as requests and responses.

This comparison clearly illustrates that manually identifying requests and responses from large and complex copybooks is challenging, time-consuming, and prone to errors. Automating the process using static analysis techniques significantly reduces the effort and ensures more accurate results. Therefore, automation is necessary to efficiently and reliably identify the relevant requests and responses for each API in mainframe applications.

Automating APIfication offers numerous advantages:

\begin{enumerate}
\item Eliminating the need for in-depth application knowledge: Automation allows developers to extract APIs without requiring extensive knowledge of the entire application. This reduces the reliance on specific experts and opens the process to a broader range of developers.

\item Flexibility in connector tool selection: Automated APIfication is not tied to a specific connector tool. This flexibility enables organizations to switch between different connector tools based on their requirements or changing technology landscape.

\item Easy traceability and continuous improvement: With automation, the APIfication steps are well-documented and easily traceable. This makes it simpler to analyze the process and identify areas for improvement. Continuous refinements can be made to enhance the overall efficiency.

\item Reduced delivery time and predictability: Automation streamlines the APIfication process, leading to quicker delivery times for APIs. Additionally, the automated approach makes it easier to estimate the time required for APIfication, providing more predictability in project planning.

\item Enhanced developer focus: By automating the monotonous and repetitive aspects of APIfication, developers can redirect their efforts towards adding new functionalities or composing innovative applications. This boosts productivity and encourages creativity in development work.
\end{enumerate}

Overall, automation empowers organizations with a more efficient and scalable approach to APIfication, resulting in increased productivity and improved application development processes.

\begin{figure}
    \centering
    \begin{tabular}{|p{0.38\textwidth}|p{0.04\textwidth}|}
    \hline
    API & Time \\ \hline \hline
    Inquire motor policy of a customer's claim (TI) & 0.58 \\ \hline
    Add motor policy of a customer (TA) & 0.60 \\ \hline
    Update motor policy of a customer (TU) & 0.49 \\ \hline
    Inquire commercial policy of a customer's claim  & 0.47 \\ \hline
    Inquire house policy of a customer's claim  & 0.61 \\ \hline
    Inquire endowment policy of a customer's claim  & 0.59 \\ \hline
    Inquire customer's claims for a policy & 0.50 \\ \hline
    Inquire customer's details for a claim & 0.62 \\ \hline
    Inquire policy numbers of a customer & 0.60 \\ \hline
    \end{tabular}
    \Description{}
    \caption{Execution time (seconds) when GENAPP APIs exposed on an IBM Z mainframe are called from the client side.}
    \label{fig:exp3}
\end{figure}

\subsection{APIfied Application's Execution Time}
\label{sec:exec}
To validate the APIfied GENAPP, we exposed TI, TA, TU APIs and several other APIs using zOS Connect \cite{zosc}. The APIs were hosted on an IBM Z machine with configurations: GP-6, zIIP-2, RAM-16G, zOS V2.4, z14 hardware. Then, these APIs were called and timed using the curl command. Figure~\ref{fig:exp3} shows the execution time in seconds of the APIs when these are called from the client side. We also manually validated the correctness of the API's functionality.  

\subsection{Qualitative Study}
\label{sec:study}
We conducted a qualitative study involving nine practitioners with an average work experience of approximately fifteen years. These practitioners have a strong background in Mainframe application development and APIfication, and they possess extensive knowledge of the challenges associated with these processes. Some of them also have direct client-facing experience in various industries, such as banking, retail, travel, and insurance.

During the study, we provided the practitioners access to GENAPP and the private industry application. We then asked them specific questions related to the manual identification of API endpoints and the computation of their signatures. Specifically, we inquired about any deterrents they encountered and time required to complete these tasks manually. Additionally, we requested them to prioritize the artifacts/seeds mentioned in Section~\ref{sec:id}.

The participants reported facing the same deterrents mentioned in Section~\ref{sec:intro}. They mentioned that manually identifying API endpoints and computing their signatures for the GENAPP application would take, on average, at least one week. For the more complex private industry application, they estimated it would take several weeks to complete these tasks manually.

Furthermore, the participants shared their priorities for API endpoint identification, stating that they would give higher importance to endpoints driven from screens, transactions/jobs, business rules, and data access points compared to other artifacts/seeds. 

When presented with the automated approach proposed in our paper, the practitioners acknowledged that these tasks would take only a few minutes. They were impressed with the efficiency and time-saving capabilities of the automated approach. As the size and complexity of the application increases, the practitioners believed that the time saved using our automated approach would become even more significant.

\section{Related Works}
\label{sec:related}
As far as our knowledge extends, no other existing work offers a comprehensive list for identifying APIs and an automated approach for computing API signatures in mainframe applications.


Our work is related to migrating legacy mainframe applications to a Service-Oriented Architecture. Survey paper on identifying APIs~\cite{survey} does not cover mainframe technology related artifacts and source code constructs in detail. They list only transactions and documentation as a way of identifying APIs. Existing literature proposes creating one API per COBOL program~\cite{mzfm19} and using machine learning techniques to identify APIs~\cite{taammeg22}. Furthermore, the literature~\cite{empcs14} describes how a REST service layer can be created when provided with code and request/response data.

Several modernization tools~\cite{kalia2021mono2micro, icws22, icse23, desai2021graph, ass14, zhw06} discuss application refactoring and communication between candidate microservices, specifically for Java-based applications. However, their concepts and methods cannot be directly applied to COBOL workflows, as COBOL lacks a clear function concept for easy understanding of boundaries and parameters. Additionally, specific COBOL concepts like copybook and CICS link are proprietary to the language. Works like \cite{k10, s01, se96, sdbac21, dsb20} focus on identifying business rules but do not address the identification of API signatures.

Screen scraping~\cite{scraping, cfft06} replaces the screen with HTML pages that call APIs exposed by the mainframe application. In contrast to our approach, this approach only exposes screen-interfacing functionalities and not the internal functions of the application.

Tools commonly used in the industry, such as IBM Application Discovery and Delivery Intelligence for IBM Z (ADDI)~\cite{addi}, can determine whether a variable has been read or written within a program statement. However, these tools do not identify APIs and do not perform static analyses for establishing requests and responses for a set of statements that are exposed as an API. 

Our methodology is thus significant in its ability to automatically identify APIs using the comprehensive artifacts and compute their signatures. Additionally, we propose code refactoring techniques to enable communication and generate efficient APIs.

\section{Conclusions}
\label{sec:conclusions}

In this paper, we propose a comprehensive framework to automate APIfication of mainframe applications. Our approach leverages static program analysis techniques to accurately extract APIs and compute their signatures. 
We validate the correctness of our techniques through construction. We demonstrate the effectiveness and scalability of our approach through experiments and qualitiative study on a public mainframe application (GENAPP) and two industry mainframe applications. The results show that our automated approach significantly reduces the time required for API identification and computation of API signature compared to manual methods. Additionally, we successfully execute a subset of our candidate APIs on an IBM Z mainframe system, showcasing the practical applicability of our approach in real-world scenarios.

Overall, our proposed framework provides a scalable and efficient solution to address the challenges in mainframe APIfication and facilitates the transformation of mainframe applications into modern, microservices-based architectures.



\bibliographystyle{ACM-Reference-Format}
\bibliography{sample-base}

\appendix
\section{Business Usecases of APIfication}
\label{sec:usecases}

The mainframe APIfication process encompasses several common business use cases, each involving specific automated steps. These use cases and their corresponding automated steps are as follows:

\begin{itemize}
    \item 
    Create Web or Mobile UIs for Mainframe Applications:
    \begin{inparaenum}
        \item List all screens and transactions in the application.
        \item Identify the copybooks and fields used in these screens and transactions.
        \item APIfy the transactions to expose them as APIs.
        \item Create a modern UI that invokes the relevant APIs based on client requirements.
    \end{inparaenum}
    \item
    Exposing Data and Functionalities:
    \begin{inparaenum}
        \item Identify the functionality to be exposed with the assistance of subject matter experts (SMEs).
        \item Identify the code relevant to the functionality.
        \item Extract the relevant functional slice~\cite{s01} from the code.
        \item APIfy the extracted code to identify the requests and responses.
        \item Refactor the caller code to enable communication with the generated services.
    \end{inparaenum}

    \item
    Extracting Business Rules as External Services for Rule Engine Execution:
    \begin{inparaenum}
        \item Identify code blocks containing business rules using business rules extraction tools.
        \item APIfy the extracted code, as done in the previous use cases.
    \end{inparaenum}
    \item
    Refactoring Monolith Applications into Candidate Microservices:
    \begin{inparaenum}
        \item Split the monolith into functional modules with the guidance of SMEs.
        \item Identify internal services, i.e., code required by different functional modules.
        \item Identify external services.
        \item APIfy the extracted code as in the previous use cases.
    \end{inparaenum}
\end{itemize}

Using this list of business use cases, we proposed our APIfication methodology in Section~\ref{sec:id}. These automated steps enable the efficient transformation of mainframe applications into modernized, accessible APIs, offering improved functionality and integration capabilities.

\section{Formalization and Proofs}
\label{sec:formal}

\renewcommand\qedsymbol{} 
\newcommand{\req}{\mathrm{Req}}
\newcommand{\resp}{\mathrm{Resp}}
\newcommand{\inn}{\mathrm{In}}
\newcommand{\out}{\mathrm{Out}}
\newcommand{\ereq}{\widehat{\mathrm{Req}}}
\newcommand{\eresp}{\widehat{\mathrm{Resp}}}
\newcommand{\succn}{\mathrm{succ}}
\newcommand{\gen}{\mathrm{Gen}}
\newcommand{\killn}{\mathrm{Kill}}
\newcommand{\stmts}{\mathrm{stmts}}

In Section~\ref{sec:eq-static}, we formalize the computation of requests/responses at the intra-program level using two different static analyses:
\begin{itemize}
\item Flow-sensitive static analysis: This analysis computes requests/responses by considering the data flow dependencies within the code, taking into account the order of reads and writes to variables. 
\item Flow-insensitive static analysis: This analysis computes requests and responses without considering the order of reads and writes to variables, making it less precise but faster to compute.
\end{itemize}

In Section~\ref{sec:eq-exec}, we define the execution semantics of requests and responses to understand their behavior during program execution.

In Section~\ref{sec:eq-sound}, we prove that both static analyses are sound, meaning that they provide correct and valid results with respect to the execution semantics. Furthermore, we demonstrate that the flow-sensitive analysis is at least as precise as the flow-insensitive analysis, ensuring that the flow-sensitive analysis provides more accurate results.

\subsection{Key Terminologies of Static Analysis}
In static analysis, data flow equations are mathematical representations used to model the flow of data (in our case request/response variables) within a code block without executing the code block (Section~\ref{sec:eq-static}). These equations define how data flows from one program point to another. In flow-sensitive analysis, we suffix data with $\inn_n$ and $\out_n$ to refer to the program point before and after statement $n$, respectively. In flow-insensitive analysis, we subscript data with $(s,e)$ to denote the code block from statement numbers $s$ to $e$. Terms $\gen_n$ and $\killn_n$ denote sets to model the effects of program statement $n$ on the computed data.

Execution semantics (Section~\ref{sec:eq-exec}) in static analysis  refers to understanding how a code block affects the data. These are also represented using data flow equations. Suffixes $\inn_{p,n}$ and $\out_{p,n}$ refer to program points before and after statement $n$ for an execution path $p$. Also, subscript $p, (s,e)$ denotes the code block from statement numbers $s$ to $e$ along execution path $p$.

\subsection{Computation using Static Analysis}
\label{sec:eq-static}
The data flow equations in this section compute request variables in  $\req\inn_n$ and $\req\out_n$ and response variables in $\resp_{(s,e)}$ using static analyses.
$\req\gen_n$ denotes the set of variables used in statement $n$. $\req\killn_n$ and $\resp\gen_n$ represent identical sets of variables defined in statement $n$, but they are utilized for defining requests and responses, respectively. Function $\succn(n)$ denotes the set of statements succeeding $n$ along all possible control flow paths. Function $\stmts(s,e)$ denotes the set of statements in the code block from statement $s$ to $e$.

\emph{Flow-sensitive static analysis}
\begin{equation}
\req\inn_n = \req\gen_n \cup (\req\out_n - \req\killn_n)
\end{equation}
\begin{equation}
\req\out_n = 
\begin{cases}
    \{ \} & n \emph{ is end statement} \\
     \cup_{k \in \succn(n)} \req\inn_k & \mathrm{otherwise} 
\end{cases}
\end{equation}
\begin{equation}
\resp_{(s,e)} = \cup_{k \in \stmts(s,e)} \resp\gen_k
\end{equation}

\emph{Flow-insensitive static analysis}
\begin{equation}
\req_{(s,e)} = \cup_{k \in \stmts(s,e)} \req\gen_k
\end{equation}
\begin{equation}
\resp_{(s,e)} = \cup_{k \in \stmts(s,e)} \resp\gen_k
\end{equation}

\subsection{Computation using Execution Semantics}
\label{sec:eq-exec}
The data flow equations in this section compute request variables in $\ereq\inn_{p,n}$ and $\ereq\out_{p,n}$ and response variables in $\eresp_{p,(s,e)}$.
Function $\succn(p,n)$ denotes the set of statements succeeding $n$ along control flow path $p$. Function $\stmts(p, s, e)$ denotes the set of statements in the code block from statement $s$ to $e$ along $p$.

\begin{equation}
\ereq\inn_{p,n} = \req\gen_n \cup (\ereq\out_{p,n} - \req\killn_n)
\end{equation}
\begin{equation}
\ereq\out_{p,n} = 
\begin{cases}
    \{ \} & n \emph{ is end statment} \\
     \ereq\inn_{p,\succn(p,n)} & \mathrm{otherwise} 
\end{cases}
\end{equation}
\begin{equation}
\eresp_{p,(s,e)} = \cup_{k \in \stmts(p,s,e)} \resp\gen_k
\end{equation}

\subsection{Soundness and Precision}
\label{sec:eq-sound}
In this section, we prove soundness and precision of computed request/response variables. The request/response variables for the code block $(s,e)$ are accumulated in the following: 
\begin{itemize}
    \item $\req\inn_s$ and $\resp_{(s,e)}$ in flow-sensitive analysis
    \item $\req_{(s,e)}$ and $\resp_{(s,e)}$ in flow-insensitive analysis
    \item $\cup_{\forall p} \ereq\out_{p,e}$ and $\cup_{\forall p} \eresp_{p,(s,e)}$ in execution semantics
\end{itemize}

\begin{theorem}
{\em Soundness}. Our proposed analyses compute sound values with respect to the execution semantics.
\end{theorem}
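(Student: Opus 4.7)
The plan is to establish soundness as a set-inclusion statement: each statically computed collection of request or response variables must be a superset of the corresponding quantity obtained from the execution semantics, taken over all execution paths $p$. Concretely, I would prove (i) $\req\inn_s \supseteq \cup_{p} \ereq\inn_{p,s}$ for the flow-sensitive analysis, (ii) $\req_{(s,e)} \supseteq \cup_{p} \ereq\inn_{p,s}$ for the flow-insensitive analysis, and (iii) $\resp_{(s,e)} \supseteq \cup_{p} \eresp_{p,(s,e)}$ for responses. Each claim rests on the same observation: the static equations quantify over \emph{all} successors or \emph{all} statements in the code block, whereas the execution semantics pick out only those lying on a specific path.

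I would dispatch the response claim first, since it is essentially immediate. For any path $p$, $\stmts(p,s,e) \subseteq \stmts(s,e)$, so $\eresp_{p,(s,e)} \subseteq \resp_{(s,e)}$ by monotonicity of union, and taking the union over $p$ preserves the inclusion. The flow-insensitive request claim is analogous: $\req_{(s,e)} = \cup_{k \in \stmts(s,e)} \req\gen_k$ already contains every generated variable in the block, and hence dominates any path-specific computation even after the latter refines by kills (kills only shrink the semantic set).

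The flow-sensitive request case is the substantive one. I would proceed by induction on the distance from a statement $n$ to the end statement along a path $p$, proving the strengthened invariant $\req\inn_n \supseteq \ereq\inn_{p,n}$ and $\req\out_n \supseteq \ereq\out_{p,n}$ for every $n$ and every $p$. The base case is the end statement, where both sides equal $\{\}$. In the inductive step, since $\succn(p,n) \in \succn(n)$, we have $\req\out_n = \cup_{k \in \succn(n)} \req\inn_k \supseteq \req\inn_{\succn(p,n)} \supseteq \ereq\inn_{p,\succn(p,n)} = \ereq\out_{p,n}$ by the induction hypothesis. Monotonicity of the transfer function $X \mapsto \req\gen_n \cup (X - \req\killn_n)$ then lifts the inclusion through to $\req\inn_n$, completing the step. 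Setting $n = s$ and unioning over $p$ yields the desired soundness.

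The principal obstacle is cyclic control flow, such as loops or recursive program calls within the block, where the static analysis converges to a least fixpoint while execution paths may be unboundedly long. My plan is to layer an outer induction on the fixpoint iterates: at iterate $i$, the static approximation dominates $\ereq\inn_{p,n}$ for every path whose unrolling depth is at most $i$. Since every concrete execution path has finite length and the fixpoint is the union of its iterates, soundness follows in the limit. The crucial structural property keeping this argument closed is that the kill set is applied \emph{after} the union in the transfer function, so the function is monotone in its input and inclusions propagate cleanly through successive iterates.
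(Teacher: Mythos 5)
Your proposal is correct and follows the same overall strategy as the paper: soundness is cast as the three superset inclusions $\req\inn_s \supseteq \cup_{p} \ereq\inn_{p,s}$, $\req_{(s,e)} \supseteq \cup_{p} \ereq\inn_{p,s}$, and $\resp_{(s,e)} \supseteq \cup_{p} \eresp_{p,(s,e)}$, justified by the fact that the static equations union over all successors and all statements while the semantics restrict to a single path. The difference is one of completeness rather than of route: the paper's proof simply asserts these inclusions with a one-sentence appeal to ``generation is over-approximated and killing is under-approximated,'' whereas you actually discharge them --- the response and flow-insensitive cases via $\stmts(p,s,e) \subseteq \stmts(s,e)$, and the flow-sensitive case via a backward induction along each path using monotonicity of the transfer function $X \mapsto \req\gen_n \cup (X - \req\killn_n)$, wrapped in an outer induction on fixpoint iterates to handle cyclic control flow. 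This last layer is the standard MOP-versus-fixpoint safety argument and is the piece the paper leaves entirely implicit; supplying it is what makes the flow-sensitive claim rigorous, so your version is a strictly more detailed instance of the same proof rather than an alternative one.
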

\begin{proof}
From the data flow equations, it can be seen that the set of request and response variables computed by static analyses (flow-sensitive and flow-insensitive) is a superset of those computed by execution semantics for all possible execution paths. This is because, generation is over-approximated and killing is under-approximated in the proposed analyses with respect to the execution semantics.
\begin{equation*}
\req\inn_s \supseteq \cup_{\forall p} \ereq\inn_{p,s}
{\ \emph {and} \ }
\req_{(s,e)} \supseteq \cup_{\forall p} \ereq\inn_{p,s}
\end{equation*}
\begin{equation*}
\resp_{(s,e)} \supseteq \cup_{\forall p} \eresp_{p,(s,e)}
\end{equation*}
\end{proof}
\begin{theorem}
{\em Precision}. Flow-sensitive analysis is at least as precise as the flow-insensitive analysis.
\end{theorem}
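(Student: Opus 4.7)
The plan is to establish the containment $\req\inn_s \subseteq \req_{(s,e)}$ for requests and equality for responses, which together witness that flow-sensitive analysis produces a set no larger (hence at least as precise) than flow-insensitive analysis. The response case is immediate: both analyses use the same equation $\resp_{(s,e)} = \cup_{k \in \stmts(s,e)} \resp\gen_k$, so the two computed sets coincide. The interesting part is requests, where I would compare the fixpoint solution of the backward liveness-style system (equations 1--2) against the flat union of $\req\gen_k$ (equation 4).

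The key idea I would use is a monotonicity argument on the kill sets. First, I would observe that if we replaced every $\req\killn_n$ in equations (1)--(2) with the empty set, the resulting system $\req\inn_n^\circ = \req\gen_n \cup \req\out_n^\circ$, $\req\out_n^\circ = \cup_{k \in \succn(n)} \req\inn_k^\circ$ has a least fixpoint whose value at $s$ is exactly the union of $\req\gen_k$ over all statements $k$ reachable from $s$ via $\succn$, which is a superset of (and, for the block $(s,e)$, equal to) $\req_{(s,e)}$. Second, the original flow-sensitive transfer function is monotone in the kill component in the sense that enlarging $\req\killn_n$ can only shrink (or preserve) $\req\inn_n$; therefore the true fixpoint satisfies $\req\inn_s \subseteq \req\inn_s^\circ = \req_{(s,e)}$.

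Concretely the steps would be: (i) state the two data-flow systems in a common lattice framework (subsets of variables, ordered by $\subseteq$); (ii) prove by induction on the fixpoint approximants that setting kill sets to $\emptyset$ gives a system whose least fixpoint at $s$ reduces to $\cup_{k \in \stmts(s,e)} \req\gen_k$; (iii) invoke monotonicity of fixpoints with respect to the transfer functions to conclude $\req\inn_s \subseteq \req_{(s,e)}$; (iv) combine with the equality $\resp_{(s,e)} = \resp_{(s,e)}$ to obtain the theorem.

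The main obstacle I anticipate is step (ii), namely making the fixpoint argument watertight in the presence of cyclic control flow (loops inside the block). With loops, $\succn$ induces a graph with cycles, so the naive unrolling used to equate $\req\inn_s^\circ$ with the union over reachable statements must be replaced by a Kleene iteration argument on a complete lattice, together with a careful treatment of the boundary condition at the block's end statement (ensuring we only accumulate $\req\gen_k$ for $k \in \stmts(s,e)$, and do not inadvertently leak past $e$). A minor but related subtlety is that the monotonicity step (iii) requires the transfer functions to be monotone jointly in their inputs and in the kill sets; since set union is monotone and set subtraction is anti-monotone in the subtrahend, this is straightforward, but I would verify it explicitly rather than leave it implicit.
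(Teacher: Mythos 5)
Your proposal is correct and establishes exactly the containment $\req\inn_s \subseteq \req_{(s,e)}$ (together with the trivial coincidence of the response equations) that the paper's own proof merely asserts with ``it can be seen''; your kill-set monotonicity plus Kleene-iteration argument is the right way to make that assertion rigorous, so the two take essentially the same route. One small slip worth fixing: with kill sets emptied, the least fixpoint at $s$ is the union of $\req\gen_k$ over statements \emph{reachable} from $s$ within the block, which is a \emph{subset} (not a superset) of $\cup_{k \in \stmts(s,e)} \req\gen_k$ whenever the block contains statements unreachable from $s$ --- but since you only need $\req\inn_s \subseteq \req\inn_s^\circ \subseteq \req_{(s,e)}$, this only strengthens the chain and the conclusion stands.
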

\begin{proof}
It can be seen that the flow-sensitive computation of request variables forms a subset of the flow-insensitive computation of request variables. However, the analysis yields equivalent response variables in both cases.
\begin{equation*}
\req\inn_s \subseteq \req_{(s,e)}
\end{equation*}
\end{proof}

By formalizing and proving the correctness of these analyses, we establish a solid foundation for the automated identification of requests and responses, allowing us to accurately extract APIs from the mainframe applications.

\end{document}